\title[Data-Driven Optimal Control via Koopman Generator]{Data-driven optimal control of unknown nonlinear dynamical systems using the  Koopman operator}
\newtheorem{assumption}{Assumption}
\newcommand{\by}{{\bm y}}
\newcommand{\bM}{{\mathcal{M}}}
\newcommand{\bR}{\mathbb{R}}
\newcommand{\bN}{\mathbb{N}}
\newcommand{\bC}{\mathbb{C}}
\newcommand{\bx}{{\bm{x}}}
\newcommand{\bz}{{\bm{z}}}
\newcommand{\bu}{{\bm{u}}}
\newcommand{\bU}{\mathcal{U}}
\newcommand{\hL}{\mathcal{L}} 
\newcommand{\hF}{\mathcal{F}}
\newcommand{\sspan}{\mathrm{span}}
\newcommand{\argmin}{\mathrm{argmin}}
\renewcommand{\rm}[1]{\mathrm{#1}}
\author{%
 \Name{Zhexuan Zeng} \Email{xuanxuan@hust.edu.cn}\\
 \addr Department of Automatic Control, Huazhong University of Science and Technology, Wuhan, China.
	\AND
 \Name{Ruikun Zhou} \Email{ruikun.zhou@uwaterloo.ca}\\
 \addr Department of Applied Mathematics,
        University of Waterloo, Waterloo, Ontario N2L 3G1, Canada.
 \AND
 \Name{Yiming Meng} \Email{ymmeng@illinois.edu}\\
 \addr Coordinated Science Laboratory, University of Illinois Urbana-Champaign,
Urbana, IL 61801, USA.
 \AND
  \Name{Jun Liu} \Email{j.liu@uwaterloo.ca}\\
 \addr Department of Applied Mathematics,
        University of Waterloo, Waterloo, Ontario N2L 3G1, Canada.
}
\begin{document}
	
	\maketitle
	
	\begin{abstract}%
		Nonlinear optimal control is vital for numerous applications but remains challenging for unknown systems due to the difficulties in accurately modelling dynamics and handling computational demands, particularly in high-dimensional settings. This work develops a theoretically certifiable framework that integrates a modified Koopman operator approach with model-based reinforcement learning to address these challenges. By relaxing the requirements on observable functions, our method incorporates nonlinear terms involving both states and control inputs, significantly enhancing system identification accuracy. Moreover, by leveraging the power of neural networks to solve partial differential equations (PDEs), our approach is able to achieving stabilizing control for high-dimensional dynamical systems, up to 9-dimensional. The learned value function and control laws are proven to converge to those of the true system at each iteration. Additionally, the accumulated cost of the learned control closely approximates that of the true system, with errors ranging from $10^{-5}$ to $10^{-3}$.
	\end{abstract}
	
	\begin{keywords}%
		Nonlinear optimal control, system identification, policy iteration, Koopman operator%
	\end{keywords}
	
	\section{Introduction}
    
	A central problem in control engineering is nonlinear optimal control, which 
    has broad applications across various fields, including autonomous vehicle navigation, satellite and spacecraft control, and robotic manipulators. 
    
    A natural approach to pursue optimal control for continuous-time nonlinear dynamical systems is first linearizing the system at each state, representing the nonlinear dynamics as a state-dependent linear system. This allows the control law to be derived by solving state-dependent Riccati equation \citep{farsi2022piecewise}. However, this approach typically yields only a sub-optimal controller. An alternative method for solving the optimal control problem involves addressing the Hamilton-Jacobi-Bellman (HJB) equation. Since the HJB equation is a nonlinear partial differential equation that is notoriously difficult to solve directly, most research focuses on obtaining approximate solutions indirectly through policy iteration techniques \citep{leake1967construction,saridis1979approximation,beard1995improving,jiang2017robust}. Originating from the optimal control of Markov decision processes \citep{bellman1957dynamic,howard1960dynamic}, policy iteration begins with an initial stabilizing control and iteratively improves the closed-loop performance through two key steps: policy evaluation and policy improvement. Specifically, policy evaluation involves solving the Generalized Hamilton-Jacobi-Bellman (GHJB) equation, a linear partial differential equation that is generally more tractable than the HJB equation. For low-dimensional problems, Galerkin approximations have demonstrated their effectiveness in providing accurate solutions to the HJB equation with arbitrary precision \citep{beard1997galerkin,beard1998approximate}. To overcome the curse of dimensionality in high-dimensional systems, neural networks are increasingly employed to approximate the solution to the GHJB equation. These networks ensure convergence to the true solution at each iteration, leveraging their ability to approximate complex functions and scale efficiently with problem size \citep{meng2024physics,zhou2024physics}.
	
However, solving the GHJB equation requires complete knowledge of system dynamics, which is often unavailable in practice. To address this challenge, various methods based on adaptive dynamic programming (ADP) have been developed to approximate the value function and control laws directly from online measurements, such as, \citep{jiang2012computational,vrabie2009neural,jiang2014robust}. 
These model-free methods are particularly advantageous because they can leverage advanced techniques, such as deep learning, to address high-dimensional state space problems. Despite their flexibility, model-free methods often suffer from a lack of theoretical guarantees and involve high implementation complexity. In contrast, model-based methods can use established control theories—such as stability analysis, performance optimization—to design control laws with rigorous guarantees. However, their effectiveness heavily depends on the accuracy of the identified model, which can be challenging to achieve in high-dimensional scenarios.    

In recent years, the Koopman operator \citep{Koopman1931hamiltonian} has gained significant attention due to its ability to provide a linear representation of nonlinear systems within a function space. Through numerical algorithms such as Dynamic Mode Decomposition (DMD) \citep{schmid2010dynamic} and Extended DMD \citep{williams2015data,korda2018convergence}, Koopman operator-based methods have proven highly effective for system identification \citep{mauroy2019koopman} and for analyzing identifiability in relation to sampling frequency \citep{zeng2022sampling, zeng2024sampling, zeng2024generalized} in autonomous systems. To address control problems, many studies have explored representing nonlinear systems with inputs using the Koopman operator \citep{korda2018linear, mauroy2020koopman}. Although theoretically feasible by treating inputs as augmented 
states, practical implementation of this framework remains challenging, as it often necessitates neglecting certain terms such that the system becomes completely linear with respect to both the lifted state and the input. This limitation reduces both the accuracy of the identified system and the effectiveness of the resulting controllers, especially for high-dimensional systems.

To address the aforementioned limitations, the main contributions of this paper are as follows:
\begin{itemize}
\item We develop a theoretically certifiable framework, integrating a modified Koopman operator approach with model-based reinforcement learning, for control system identification and deriving optimal control policies for unknown nonlinear systems.
\item We improve the Koopman operator-based identification accuracy by relaxing the requirements on observable functions, allowing accurate recovery of nonlinear control-affine terms.
\item We enhance the scalability of computing optimal control directly from data by leveraging the power of neural networks to solve PDEs. We demonstrate the effectiveness of our method through four example systems with state dimensions ranging from 2 to 9 and input dimensions up to 4.
\end{itemize}
	
	\section{Problem formulation}\label{sec:prob}
	We consider a control-affine dynamical system:\begin{equation}\label{sys}
		\dot{\bx} = f(\bx)+g(\bx)\bu,
	\end{equation} where $\bx\in\bR^n$ denotes the state vector, $\bu\in\bR^m$ denotes the input, $f:\bR^n\to\bR^n$ is a continuously differentiable vector field, and $g:\bR^n\to\bR^{n\times m}$ is a smooth function. We also assume that $\bm 0$ is an equilibrium point of \eqref{sys}, i.e., $f(\bm 0) = \bm 0$. Subject to the control $\bu$, the unique solution starting from $\bx_0$ is denoted by $S^t(\bx_0,\bu)$.
	
	We define the space of admissible controls of \eqref{sys} as follows. 
	\begin{definition}[Admissible control]
		A feedback control $\bu = \kappa(\bx)$ is admissible on $\Omega\subset\bR^n,$ where $\bm 0 \in \Omega$, if the following conditions are satisfied: (1) $\kappa$ is Lipschitz continuous on $\Omega$; (2) $\kappa (\bm 0)  = \bm 0$; (3) $\bu = \kappa(\bx)$ is a stabilizing control, i.e., $\lim_{t\to\infty}|S^t(\bx_0,\bu)| = 0$ for $\forall \bx_0\in \Omega$. We denote the space of admissible control as $\bU(\Omega)$.
	\end{definition}
	
	We are interested in finding the optimal control $\kappa^*(\bx) \in \bU(\Omega)$ from data, in the case of infinite interval $t\in[0,\infty)$. Specifically, we introduce the function $L(\bx,\bu)=Q(\bx)+\|\bu\|_R^2,$ where $Q:\bR^n\to\bR$ is a positive definite function, and $\|\bu\|_R^2 = \bu^T R \bu$ given  some symmetric and positive definite $R: \bR^n\to\bR^{m\times m}$.  The associated cost is commonly defined as follows:
	\begin{equation}
		J(\bx,\bu) = \int_{0}^{\infty} L(S^t(\bx,\bu),\bu(t)){\rm d} t
	\end{equation}
	The optimal control is denoted as $\bu^* = \kappa^*(\bx)$ such that \begin{equation}
		J(\bx,\bu^*) = \inf_{\kappa\in\bU} J(\bx,\kappa(\bx)),
	\end{equation} and the value function is defined as $V(\bx):=J(\bx,\bu^*)$. 
    Intuitively, the value function describes the system's infimum energy loss over the state space for all possible control inputs, thereby providing a foundation for deriving the optimal control $\bu^* = \kappa^*(\bx).$ 
    Specifically, the optimal control is derived by minimizing the Hamiltonian: \begin{equation}
    \kappa^*(\bx) := \argmin_{\bu\in\bU}
 \{L(\bx,\bu)+ DV(\bx)\cdot f(\bx,\bu)\}=-\frac{1}{2}R^{-1}g^T(\bx)(DV(\bx))^T.
\end{equation}
	
This paper aims to systematically explore theoretically certifiable method for computing optimal control from data in unknown high-dimensional nonlinear systems. 
	
	\section{Preliminaries of exact policy iteration and Koopman operator theory}\label{sec:pre}
	\subsection{Exact policy iteration}
We begin by reviewing the policy iteration method for systems with known dynamics.	The value function that we aim to find is generally a viscosity solution to the HJB equation, i.e.,\begin{equation}
		H(\bx,DV(\bx)) = 0,
	\end{equation} where $
		H(\bx, p) := \sup_{\bu\in\bR^m}-G(\bx,\bu,p),$ and $
		G(\bx,\bu,p) = L(\bx,\bu)+p(f(\bx)+g(\bx)\bu).$
	However, solving and analyzing this equation is a complex task.
    The policy iteration method 
	assumes that $V \in C^1(\Omega)$ and seeks $C^1$ solutions $V_i$ to the GHJB equation $G(\bx, \bu_i, DV_i(\bx)) = 0$ for each iteration $i \in \{0, 1, \cdots \}$, specifically, given an initial input $\bu_0=\kappa_0(\bx)$ that stabilizes the system, the policy iteration method performs policy evaluation and policy improvement iteratively:\begin{itemize}
		\item[1] The $i$-th policy evaluation: Given a policy $\bu_i = \kappa_i(\bx)$, solving the GHJB equation below to compute the value function $V_i(\bx)$ at $x\in\Omega\setminus\{\bm 0\}$, and we set $V_i(\bm 0) = 0$.
		\begin{equation}\label{GHJB}
			G(\bx,\kappa_i(\bx),DV_i(\bx)) := L(\bx,\kappa_i(\bx))+DV_i(\bx)(f(\bx)+g(\bx)\kappa_i(\bx)) = 0.
		\end{equation}
		\item[2] The $i$-th policy improvement: Given the value function $V_i(\bx)$, solving the GHJB to update the policy:
		\begin{equation} \label{policy}
			\begin{aligned}
				\kappa_i(\bx) = \left\{\begin{matrix}
					&-\frac{1}{2}R^{-1}g^{T}(\bx)(DV_i(\bx))^T, \quad \bx\neq\bm 0;\\
					&\bm 0,\quad \bx = \bm 0.
				\end{matrix} \right.
			\end{aligned}
		\end{equation}
	\end{itemize}
	
	Assuming that $V\in C^1(\Omega)$, the convergence value function $V_\infty$ is expected to solve the HJB equation and $\bu_i\to\bu^*$ at least pointwise \citep[Theorem 3.1.4]{jiang2017robust}. 

    \subsection{Koopman operator theory}
To solve  \eqref{GHJB} for unknown systems, we first identify $f(\bx)$ and $g(\bx)$.
Koopman operator provides an alternative perspective to analyze and learn nonlinear dynamical systems. Below we briefly introduce the Koopman operator theory. Let us first consider $\bu = \bm 0$. Then the dynamical system \eqref{sys} becomes:
	\begin{equation}\label{eq1}
		\begin{aligned}
			\dot{\bx}&=f (\bx).
		\end{aligned}
	\end{equation}
    The flow induced by this autonomous system is denoted as $S^t(\bx,\bm 0),t>0$, i.e., $\bx(t)=S^t(\bx(0),\bm 0)$. The Koopman operator $U^t:\mathcal F\to\mathcal F$ is a linear operator acting on the observable functions of the states, i.e., $g\in\hF:\bR^n\to\bC$, which is defined as \begin{equation}\label{defkoopman}
		U^t g(\bx) = g(S^t(\bx, \bm 0)).
	\end{equation} 
	The infinitesimal generator $\hL$ of the Koopman operator is defined as \begin{equation}
		\hL g = \lim_{t\to 0^+}\frac{1}{t}(U^t-I)g, ~g\in\mathcal D(\hL),
	\end{equation} where $\mathcal D(\hL)$ denotes the domain of $\hL$. The generator is also a linear operator. Assuming that observable functions $g\in\mathcal F$ are continuously differentiable with compact support, we have $
		\hL = f\cdot \nabla$.
	
	Due to its linearity and rich theoretical support, the Koopman operator theory enjoys wide popularity in nonlinear system identification and control. Despite its success, accurately representing a nonlinear system with input as a linear input-output system remains challenging, as it often requires disregarding certain terms that describe how control actions evolve in the observation space.
	
	\section{Description of the method} \label{sec:method}
	The main idea of the method is to first identify the nonlinear dynamical system with control using the generator. 
    Then we solve its
	optimal control problems by iterative procedure utilizing random neural approximations. In the following, we describe the steps in detail.
	
	\subsection{Identification of control-affine system}
	The nonlinear system is equivalently described as an infinite-dimensional linear system driven by the Koopman operator. In practice, we lift and embed the original system into a high-dimensional function space, then approximate the Koopman operator and its generator using a linear, matrix-like operator defined on the same function space domain.
	
	\subsubsection{Lifting of the dynamical system}
	Theoretically, we view the input $\bu$ of \eqref{sys} as an external state of the dynamical system and assume that it remains unchanged during the sampling time. Then we have the extended system: 
        \begin{equation}\label{eq:sample_control_id}
        \begin{aligned}
		\dot{\bx} &= f(\bx)+g(\bx)\bu,\\
		\dot{\bu} &= 0,
	\end{aligned} 
    \end{equation}
    and the corresponding extended state $[\bx,\bu]^T$. For simplicity of the notation, we also denote the flow of the extended system as $S^t:\bM\to\bM$ with the invariant set $\bM$ of the extended states, where $\bM \subseteq \mathbb{R}^{n+m}$.
	
	To accurately characterize the original system, we recover the generator within an observable space where the basis functions $\{\varphi_i(\bx,\bu)\}$ include coupling terms between $\bx,\bu$. This approach avoids constructing a high-dimensional linear input-output system of the form $\dot{\bz} = A\bz+B\bu$, where $\bz = \varphi(\bx)$ and $\varphi$ is a vector-valued function consisting of multiple scalar observation functions that depend solely on $\bx$. In this work, we select the polynomial observable functions of $(\bx,\by)$. Furthermore, to recover $f(\bx)$ and $g(\bx)$ in the system, we restrict the total degree of each control input $u_i,i=1,\ldots,m$ to $1$ within the observable functions, i.e., $
		\hF_N = \sspan_{i=1,\ldots,N}\{\varphi_i(\bx,\bu)\},$ where $\varphi_i(\bx,\bu) = \prod_{j=1}^{n}x_j^{p_j}\prod_{l=1}^{m} u_l^{q_l}$, $\sum_{l=1}^{m}q_l \le 1,$ with $x_j$ being the $j$-th component of $\bx$, $u_l$ being the $l$-th component of $\bu$, $q_l, p_j\in\bN$.
	
	\subsection{Identification of the generator}
	Typically, the Koopman operator is first identified from data, and the generator is then obtained by taking the logarithm of the Koopman operator \citep{mauroy2019koopman}. However, this indirect approach requires the chosen observable space to be invariant under both the Koopman operator and its generator, which is a condition that is challenging to meet in practice. Consequently, this method often introduces greater approximation error compared to directly identifying the generator.
	
	To avoid indirect approximation error, we identify the generator based on the Yosida approximation that utilizes the resolvent operator of the Koopman operator. 

The Yosida approximation $L_\lambda$, defined as \begin{equation}\label{E: yosida}
		L_\lambda \varphi_i(\bx)= \lambda^2\int_{0}^{\infty}e^{-\lambda t}U^t\varphi_i(\bx){\rm d}t - \lambda \varphi_i(\bx),
	\end{equation} converges to the generator $\hL$ on $C^1(\bM)$ as $\lambda\to\infty$ in a strong sense \citep[Theorem 3.3]{meng2024koopman}, i.e., $L_\lambda h\to \hL h$ for any observable function $h\in C^1(\bM)$. To ensure numerical tractability, we further approximate \eqref{E: yosida} for each observable function $\varphi_i$ using a truncated integral over a fixed finite-time horizon $[0, T_\text{max}]$, as follows:
	\begin{equation}\label{E: yosida_trunc}
		L_{\lambda,T_\text{max}} \varphi_i(\bx) = \lambda^2\int_{0}^{T_\text{max}}e^{-\lambda t}U^t\varphi_i(\bx){\rm d}t - \lambda \varphi_i(\bx).
	\end{equation} 
    
    Given the choice of polynomial observables $\{\varphi_i\}$, the overall approximation error of $\hL\varphi_i$ (for each $i$) using $L_{\lambda,T_\text{max}}\varphi_i$ is of the order $\mathcal{O}(e^{-\lambda T_\text{max}})$, as stated in \cite[Theorem 4.2]{meng2024koopman}.  Notably, for large $\lambda$, truncating the integral at any $T_\text{max}$ has a non-dominant impact on the approximation accuracy. This error bound allows us to directly obtain the value of $\hL\varphi_i$ using the evaluation of \eqref{E: yosida_trunc}, and further adapt it with sampled data using numerical quadrature techniques.

    This allows us to construct two matrices as follows: \begin{equation}
		X = \left(\begin{matrix}
			\varphi_1(\bx_1) & \ldots & \varphi_N(\bx_1)\\
			\vdots & \ddots & \vdots\\
			\varphi_1(\bx_M) & \ldots & \varphi_N(\bx_M)\\
		\end{matrix}\right), Y =  \left(\begin{matrix}
			L_{\lambda,T_{max}}\varphi_1(\bx_1) & \ldots & L_{\lambda,T_{max}}\varphi_N(\bx_1)\\
			\vdots & \ddots & \vdots\\
			L_{\lambda,T_{max}}\varphi_1(\bx_M) & \ldots & L_{\lambda,T_{max}}\varphi_N(\bx_M)\\
		\end{matrix}\right),
	\end{equation} where $\{\bx_i\}_{i=1}^M$ denote $M$ samples. 
	Then we compute the matrix representation of the generator as $\hat{L}_N = (X^T X)^{-1} X^TY.$

	\subsection{Recovery of nonlinear dynamical system}
	To recover $f(\bx)$ and $g(\bx)$ from the identified generator, we denote the $i$ as the index of the observable function such that $\varphi_i(\bx) = x_j,$ where $j=1,\ldots,n.$ Then we have \begin{equation}\label{identifiedfg}
		\hat{f}_j(\bx)+\hat{g}_j(\bx)\bu = \sum_{k=1}^{N}\varphi_k(\bx,\bu) [\hat{L}_N]_{k,i},
	\end{equation} where $[\hat{L}_N]_{k,j}$ denotes the $k$-th row, $j$-th column of $\hat{L}_N$. We can approximate $\hat{f}(\bx)$ and $\hat{g}(\bx)$ corresponding to observable functions $\varphi_k(\bx,\bu) = \prod_{j=1}^{n}x_j^{p_j}\prod_{l=1}^{m} u_l^{q_l}$ with $\sum_{l=1}^{m}q_l = 0$ and $\sum_{l=1}^{m}q_l = 1$, respectively.

	\subsection{Policy iteration via linear least squares}
	Based on the identified $\hat{f}(\bx)$ and $\hat{g}(\bx)$, we continue to solve the optimal control problem by employing policy evaluation \eqref{GHJB} and policy improvement \eqref{policy}. Specifically, in the $i$-th iteration, we solve the following equation:\begin{align}
		&L(\bx,\kappa_i(\bx))+DV_i(\bx)(\hat{f}(\bx)+\hat{g}(\bx)\kappa_i(\bx)) = 0,\label{PE}\\
		&\kappa_i(\bx) = \left\{\begin{matrix}
			-\frac{1}{2}R^{-1}\hat{g}^{T}(\bx)(DV_i(\bx))^T, ~\bx\neq\bm 0;\\
			\bm 0, \bx = \bm 0.
		\end{matrix} \right.
	\end{align}
	To solve \eqref{PE} and approximate the value function, we use random feature neural network functions, resulting in a neural solution of the form $
		\hat{V}(\bx) = \bm \beta^T\sigma(W\bx+\bm b),$ where $\bm \beta\in\bR^s, W \in \bR^{s\times n}, \bm b \in \bR^s$, and $\sigma: \bR\to\bR$ is an activation function applied element-wise. It follows that $
		D\hat{V}(\bx) = \bm \beta^T \text{ diag}(\sigma'(W\bx+\bm b))W.$ Noted that $W$ and $\bm b$ can be randomly chosen, which does not require training. Then the problem of solving \eqref{PE} transforms to the problem of finding the parameter $\bm \beta$. Due to the linear dependence of $D\hat{V}(\bx)$ on $\bm \beta$ and the linearity of \eqref{PE}, it results in a linear least squares optimization problem that can be solved efficiently and accurately.
	\section{Theoretical convergence}\label{sec:conv}
    We begin by proving the convergence of identified system. In the following, we use $L_{\lambda, T_{\rm max}, N}$ to denote $N$-dimensional approximation of $L_{\lambda, T_{\rm max}}$, $F_{\lambda, T_{\rm max}, N}(\bx,\bu)$ to denote the vector field recovered from $L_{\lambda, T_{\rm max}, N}$, and $F(\bx,\bu)$ to denote the original vector field.
	\begin{theorem}[Convergence of vector field]\label{fconvergence}
		As $\lambda\to\infty, T_{max}\to\infty, N\to\infty$ simultaneously, we have $F_{\lambda,T_{max},N}\to F$ uniformly on  $\bM$, where $\bM$ is a compact set in $\mathbb{R}^{n+m}$.
	\end{theorem}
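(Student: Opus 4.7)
The plan is to split $\|F_{\lambda,T_{max},N}-F\|_{\infty}$ on $\bM$ via the triangle inequality into three pieces, one for each limit, and control them separately. Because the augmented system satisfies $\dot\bu=0$, the generator applied to the coordinate observable $\varphi_{j}(\bx,\bu)=x_{j}$ gives exactly $\hL\varphi_{j}(\bx,\bu)=f_{j}(\bx)+g_{j}(\bx)\bu$, so $F(\bx,\bu)=(\hL\varphi_{j}(\bx,\bu))_{j=1}^{n}$ on $\bM$. By construction \eqref{identifiedfg}, the identified field $F_{\lambda,T_{max},N}$ is the reconstruction $\sum_{k=1}^{N}\varphi_{k}(\cdot)[\hat L_{N}]_{k,j}$ coordinatewise. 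The theorem therefore reduces to showing that, for each $j$, this reconstruction converges uniformly to $\hL\varphi_{j}$.

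First I would dispose of the Yosida and truncation errors using the results already quoted from \citep{meng2024koopman}: strong convergence $L_{\lambda}\varphi_{i}\to\hL\varphi_{i}$ on $C^{1}(\bM)$ as $\lambda\to\infty$, together with the bound $\|L_{\lambda,T_{max}}\varphi_{i}-L_{\lambda}\varphi_{i}\|_{\infty}=\mathcal{O}(e^{-\lambda T_{max}})$. Since $\bM$ is compact and the basis is finite for each $N$, these estimates yield uniform convergence of every entry of the data matrix $Y$ to the true generator action on the corresponding sample point, at the rate $\mathcal{O}(e^{-\lambda T_{max}})$ plus the Yosida residual, so $Y\to Y_\infty$ uniformly on any dictionary as $\lambda,T_{max}\to\infty$.

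Next I would address the finite-dimensional regression step. The library $\hF_{N}$ consists of polynomials in $(\bx,\bu)$ of arbitrary degree in $\bx$ and degree at most one in $\bu$; by the Stone--Weierstrass theorem its union is dense in the subspace of $C(\bM)$ that is affine in $\bu$, which contains each $\hL\varphi_{j}$. Under a standard sampling hypothesis ensuring $X^{T}X$ is invertible and uniformly well-conditioned as $N$ grows, the least-squares coefficient matrix $\hat L_{N}=(X^{T}X)^{-1}X^{T}Y$ converges to the coefficients of the orthogonal projection of $\hL\varphi_{j}$ onto $\hF_{N}$, and the projection itself converges uniformly to $\hL\varphi_{j}$ on $\bM$ by density.

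The main obstacle is coupling the three limits consistently: the truncated-Yosida error decays like $e^{-\lambda T_{max}}$, the polynomial best-approximation error is governed by the smoothness of $\hL\varphi_{j}$ relative to the growing basis, and the regression step contributes a factor involving the condition number of $X^{T}X$, which generically worsens with $N$. I would therefore state the simultaneous limit in a joint regime, for instance $\lambda T_{max}\gtrsim \log N$ with the sample design chosen so that $\|(X^{T}X)^{-1}\|$ remains uniformly bounded, and combine the three bounds to conclude uniform convergence of $F_{\lambda,T_{max},N}$ to $F$ on $\bM$.
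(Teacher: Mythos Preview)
Your triangle-inequality split and the handling of the first two pieces (Yosida residual via strong convergence, truncation via the $\mathcal{O}(e^{-\lambda T_{\rm max}})$ bound) match the paper's proof exactly, citing the same results from \cite{meng2024koopman}.

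The divergence is in the third term, and it stems from a misreading of the object $F_{\lambda,T_{\rm max},N}$. You take it to be the data-driven reconstruction $\sum_{k}\varphi_{k}[\hat L_{N}]_{k,j}$ built from the regression matrix $\hat L_{N}=(X^{T}X)^{-1}X^{T}Y$, and then argue via Stone--Weierstrass density, sampling hypotheses, and conditioning of $X^{T}X$. But in the paper $L_{\lambda,T_{\rm max},N}$ is defined as the \emph{theoretical} $N$-dimensional approximation of the operator $L_{\lambda,T_{\rm max}}$ on $\hF_{N}$, not the sample-based least-squares estimate. The third bound $\|(L_{\lambda,T_{\rm max}}-L_{\lambda,T_{\rm max},N})\varphi_{q}\|_{\infty}\to 0$ is simply an appeal to Corollary~4.6 of \cite{meng2024koopman}; no regression analysis, density argument, or coupling regime is needed. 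The gap between the data-driven $\hat L_{N}$ and the theoretical $L_{\lambda,T_{\rm max},N}$ is handled separately---and by fiat, not proof---in Assumption~\ref{assum_data} immediately after the theorem.

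So your Stone--Weierstrass and conditioning discussion, and the coupling regime $\lambda T_{\rm max}\gtrsim\log N$, are addressing a harder statement than the one actually being proved. They are not wrong in spirit, but they are out of scope here, and the extra hypotheses you introduce (uniformly well-conditioned $X^{T}X$ in $N$, adequate sample design) are precisely what the paper sidesteps by relegating the sampling error to a standing assumption.
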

	\begin{proof}
		For the $i$-th component of vector field, where $i=1,\ldots,n$, we have $F_i(\bx,\bu) = \hL \varphi_q(\bx,\bu), $ $F_{i;\lambda,T_{max},N}(\bx,\bu) = L_{\lambda,T_{max},N} \varphi_q(\bx,\bu),$ where $q$ is the index of the observable function such that $\varphi_q(\bx,\bu) = x_i$. It follows that 
        \begin{align*}
			\|F_i-F_{i;\lambda,T_{max},N}\|_{\infty}  &\le  \|(L-L_\lambda)\varphi_q\|_\infty + \|(L_\lambda - L_{\lambda,T_{max}})\varphi_q\|_\infty + \|(L_{\lambda,T_{max}} - L_{\lambda,T_{max},N})\varphi_q\|_\infty, 
		\end{align*} 
        where $\|\cdot\|_\infty$ denotes $\sup_{(\bx,\bu)\in \bM}\|\cdot\|$ with $\|\cdot\|$ being the 2-norm. 
        Based on Theorem 4.2, Theorem 3.3, and Corollary 4.6 in \cite{meng2024koopman}, we have $\|F_i-F_{i;\lambda,T_{max},N}\|_{\infty} \to 0$ as $\lambda\to\infty, T_{max}\to\infty, N\to\infty$ simultaneously.
	\end{proof}


    While Theorem \ref{fconvergence} guarantees theoretical convergence of $F_{\lambda,T_{max},N}$ to $F$, as $\lambda\to\infty, T_{max}\to\infty, N\to\infty$, the following assumption states that, with sufficiently many samples, the identified system $\hat F(\bx,\bu) = \hat f(\bx) + g(\bx)\bu$ should be close to $F_{\lambda,T_{max},N}$.

    \begin{assumption}\label{assum_data}
        For $\forall \theta>0,$ the initial conditions for (\ref{eq:sample_control_id}) can be sampled sufficiently densely in $\bM$ such that $\hat{F}(\bx,\bu) = \hat{f}(\bx)+\hat{g}(\bx)\bu$ identified from \eqref{identifiedfg} satisfies $\|F_{\lambda,T_{\rm max}, N}-\hat{F}\|_\infty \le \theta$.
    \end{assumption}

    Under Assumption \ref{assum_data} and based on Theorem \ref{fconvergence}, we can conclude that, for every $\theta>0$, there exist sufficiently large $\lambda,T_{\rm max}, N$ and sufficiently dense initial conditions such that \begin{align}\label{assum1_F}
        \|f(\bx) + g(\bx)\bu - \hat{f}(\bx) - \hat{g}(\bx)\bu\|\le \theta,\quad \forall (\bx,\bu)\in \bM.
    \end{align} 
    Without loss of generality, we assume that \eqref{assum1_F} holds for $\bu\in B = \left\{\|\bu\|\le 1\right\}$. Letting $\bu = \bm 0$, we have $\|f - \hat{f}\|_\infty\le \theta.$ It follows from the triangle inequality that $\|g(\bx)\bu -  \hat{g}(\bx)\bu\|\le 2\theta.$  With $\bu = \frac{g(\bx) - \hat{g}(\bx)}{\|g(\bx)-\hat{g}(\bx)\|}\in B$, this implies $\|g - \hat{g}\|_\infty\le 2\theta.$ In other words, for every $\theta>0$, there exist sufficiently large $\lambda,T_{\rm max}, N$ and sufficiently dense initial conditions such that $\|f - \hat{f}\|_\infty\le \theta, \|g-\hat{g}\|_\infty\le 2\theta$, which are essential requirements for the following analysis. 
    
    We expect each policy evaluation of the identified system to closely approximate that of the true system, ensuring that the algorithm ultimately produces a meaningful result. The following theorem establishes that, with each iteration, the value function and control derived from the identified system converge to those of the true system. The proof is provided in the arXiv version \citep{zeng2024optimalcontrol}. For brevity, we directly use the index $h = (\lambda,T_{max},N)$, and $f_h(\bx)+g_h(\bx)\bu$ to denote the identified vector field from data.

\begin{theorem}\label{conv_pi}
		Let $\Omega\subset\bM$ a compact invariant set for each $\dot{\bx} = F_h^{(i)}(\bx,\kappa_h^{(i-1)}(\bx)) = f_h(\bx)+g_h(\bx)\kappa_h^{(i-1)}(\bx)$ and $\dot{\bx} = F^{(i)}(\bx,\kappa^{(i-1)}(\bx)) = f(\bx)+g(\bx)\kappa^{(i-1)}(\bx)$. Assume $F_h^{(i)}$, $F^{(i)}\in C^1(\Omega)$, $L^{(i)}_h, L^{(i)}\in C^1(\Omega)$. Then, for $\forall \theta>0,$ there exists $\delta>0$ such that 
		if $\|\kappa_{h}^{(0)}-\kappa^{(0)}\|_\infty <\delta, \|f_h-f\|_\infty <\delta, \|g_h-g\|_\infty <\delta$
		, we have $\|V^{(i)}_h-V^{(i)}\|_\infty <\theta, \|\kappa_h^{(i)}-\kappa^{(i)}\|_\infty <\theta$ for all $\bx\in \Omega$.
	\end{theorem}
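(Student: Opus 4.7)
The plan is to induct on the iteration index $i$, combining the integral (characteristic) representation of the GHJB solution with continuous dependence of ODEs and their first variations on the vector field. The base case is the hypothesis $\|\kappa_h^{(0)}-\kappa^{(0)}\|_\infty < \delta$. For the inductive step, assume $\|\kappa_h^{(i-1)}-\kappa^{(i-1)}\|_\infty$ is small; we must bound $\|V_h^{(i)}-V^{(i)}\|_\infty$ and $\|\kappa_h^{(i)}-\kappa^{(i)}\|_\infty$.

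First, by the triangle inequality and the boundedness of $\kappa^{(i-1)}$ on $\Omega$, the closed-loop vector fields $F^{(i)}$ and $F_h^{(i)}$ are uniformly close on $\Omega$. Invoking Gronwall's inequality together with the $C^1$ (hence Lipschitz) regularity of $F^{(i)}$, for any finite horizon $T>0$ the trajectories $S^t(\bx)$ and $S_h^t(\bx)$ stay uniformly close on $\Omega\times[0,T]$. Since $\kappa^{(i-1)}$ is admissible on $\Omega$, the running cost decays fast enough along trajectories of $F^{(i)}$ to make $V^{(i)}(\bx) = \int_0^\infty L(S^t(\bx),\kappa^{(i-1)}(S^t(\bx)))\,\mathrm{d}t$ convergent, and this decay is preserved after small perturbations of the vector field (possibly shrinking $\Omega$ slightly to keep a compact invariant set). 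Splitting the integral at $T$, the tail is uniformly small by the decay estimate and the finite-horizon part converges uniformly by the trajectory comparison, yielding $\|V_h^{(i)}-V^{(i)}\|_\infty$ small.

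Next, differentiating the integral representation with respect to the initial condition expresses $DV^{(i)}$ in terms of the variational flow $\partial S^t/\partial \bx$, which solves a linear matrix ODE with coefficients $DF^{(i)}(S^t(\bx))$. By continuous dependence of ODE solutions on the vector field in the $C^1$ topology, the variational flows also converge uniformly on $[0,T]$, so the same split-integral argument delivers uniform convergence of $DV_h^{(i)}$ to $DV^{(i)}$ on $\Omega$. Finally, the policy improvement formula $\kappa^{(i)}(\bx)=-\tfrac{1}{2}R^{-1}g^T(\bx)(DV^{(i)}(\bx))^T$, together with $\|g_h-g\|_\infty$ small (from Assumption~\ref{assum_data} and the discussion after Theorem~\ref{fconvergence}) and the uniform convergence of $DV_h^{(i)}$ just established, gives $\|\kappa_h^{(i)}-\kappa^{(i)}\|_\infty$ small, closing the induction.

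The main obstacle I expect is the derivative comparison: value-function convergence alone is insufficient because the policy update depends on $DV^{(i)}$, and obtaining uniform convergence of the gradients requires propagating perturbations through the variational flow and invoking continuous dependence in the $C^1$ topology (rather than just the $C^0$ continuity needed for the values). A related subtlety is uniformly controlling the tail of the infinite-horizon cost integral for the perturbed closed-loop system: one must show that admissibility of $\kappa^{(i-1)}$ for the nominal dynamics, plus smallness of $\|f_h-f\|_\infty$, $\|g_h-g\|_\infty$, and $\|\kappa_h^{(i-1)}-\kappa^{(i-1)}\|_\infty$, still yields a uniform decay estimate on $\Omega$ — typically via a converse Lyapunov argument — so that the tail estimates are genuinely uniform in $h$.
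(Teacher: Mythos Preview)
Your proposal is correct and matches the paper's proof essentially step for step: the paper also reduces to the one-step implication (from $\kappa^{(i-1)}$ to $V^{(i)}$ and $\kappa^{(i)}$), writes $V^{(i)}(\bx)=\int_0^\infty L^{(i)}(S^{(i)}(t,\bx))\,\mathrm{d}t$, splits the integral at a finite horizon, handles the finite-horizon part via Gronwall/continuous dependence, and then repeats the same split-integral argument for $DV^{(i)}$ using the fundamental matrix $\Phi^{(i)}$ of the variational equation $\dot\Phi=DF^{(i)}\Phi$ before applying the policy formula. The two technical points you flag---that controlling $\kappa^{(i)}$ requires convergence of $DV^{(i)}$ (not just $V^{(i)}$), and that the tail bound must be uniform in $h$---are exactly the places where the paper's argument invests most of its effort (and, in fact, handles the second point only by assertion, so your instinct that a converse-Lyapunov-type argument is really what is needed is well placed).
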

    
	\section{Numerical experiments}\label{sec:num}
	
	\subsection{Experimental setup}
	
	To demonstrate the performance of the proposed method, we learn the optimal control for the following systems: $2$-dimensional (inverted pendulum), $4$-dimensional (cartpole), $6$-dimensional (2D quadrotor), and $9$-dimensional systems (3D quadrotor). 
	
	In the identification step: For $2$-dimensional and $4$-dimensional systems, we select the space of polynomials $\varphi_i(\bx,\bu) = \prod_{j=1}^{n}x_j^{p_j}\prod_{l=1}^{m} u_l^{q_l}$ where $p_j\le p_{max}$ for $j=1,\ldots,n$. For $6$-dimensional and $9$-dimensional systems, we select the space of polynomials constrained by $\sum_{j=1}^{n}p_j\le p_{sum}$. To ensure that the identified system has an equilibrium at $\bm 0$, we exclude constant functions from the set of observable functions, i.e., $\sum_{j=1}^{n}p_j+\sum_{l=1}^{m}q_l > 0$. For each system, we collect data with the time horizon $t\in[0,1]$ and $100$Hz sampling frequency. The specific parameters and data details for identification and policy iteration steps are provided in Tables \ref{tbl:identify_para} and \ref{tbl:pi_para} below.

	\begin{table*}[!htb] 
		\caption{The detailed information of data and parameters for identification}
		\centering
		\begin{tabular}{cccc}
\toprule	
{Dynamical system}&{Domain}&{Polynomial order}& {Initial samples}
			
			\\
			\hline
			a) Inverted pendulum              & $(\bx,\bu)\in[-1,1]^3$     &      $p_{max} = 5$    & $1000$    \\ 
			{b) Cartpole}               & $(\bx,\bu)\in[-0.2,0.2]^5$   &      $p_{max} = 3$ & $3125$        \\ 
			{c) 2D quatroter}     & $(\bx,\bu)\in[-0.2,0.2]^8$  &      $p_{sum} = 3$   & $5000$       \\ 
			{d) 3D quadrotor}     & $(\bx,\bu)\in[-0.2,0.2]^{13}$  &      $p_{sum}=3$  & $10000$        \\ 
    \bottomrule
		\end{tabular}
		\label{tbl:identify_para}
	\end{table*}
	
	\begin{table*}[!htb]
		\caption{The detailed information of data and parameters for policy iteration}
		\centering
		\begin{tabular}{c c c c}
\toprule			{Dynamical system}&{Domain}&{Hidden unites (s)}& {Samples}
			
			\\
			\hline
			a) Inverted pendulum              & $\bx\in[-1,1]^2$     &      $200$    & $3000$    \\ 
			{b) Cartpole}               & $\bx\in[-0.1,0.1]^4$   &      $3200$ & $6000$        \\ 
			{c) 2D quadroter}               & $\bx\in[-0.1,0.1]^6$  &      $3200$   & $9000$       \\ 
			{d) 3D quadrotor}     & $\bx\in[-0.1,0.1]^{9}$  &      $3200$  & $12000$        \\ 
    \bottomrule
		\end{tabular}
		\label{tbl:pi_para}
	\end{table*}
	\subsection{Numerical results}    
    \emph{1) Identification performance.} To effectively solve the HJB equation, it is crucial to accurately estimate $f(\bx)$ and $g(\bx)$ over the considered set. To demonstrate the advantages of the proposed method, we compare the evaluation error of our approach (the resolvent-based model) with the logarithm-based model proposed by \cite{mauroy2019koopman} and the widely used lifted linear model in control frameworks \citep{korda2018linear}. The evaluation errors are calculated as $E_f = \sum_{i=1}^M \|f(\bx_i) - \hat{f}(\bx_i)\|_1/M, E_g = \sum_{i=1}^M \|g(\bx_i) - \hat{g}(\bx_i)\|_1/M,$ where $\|\cdot\|_1$ denotes the element-wise sum of absolute values. To ensure a fair comparison, it is important to emphasize that the basis functions used in the logarithm-based method are equivalent to those in our approach. Additionally, the resolvent-based identification method is employed to identify the lifted linear model, where the observable function exclude cross-terms involving $\bx$ and $\bu$. To minimize the influence of the number of basis functions on the comparison, we further increase the polynomial order in the lifted linear model, ensuring its basis functions are at least equal to or more comprehensive than those used in our method. 
    
    The comparison results are detailed in Table \ref{compa_sysid}. These results demonstrate that our method achieves evaluation errors for $f$ and $g$ that are reduced by one to two orders of magnitude compared to the other two approaches, which is particularly pronounced in high-dimensional systems. This improvement in accuracy is crucial for solving the HJB equation, as our attempts with the other two methods failed to learn a control law to stabilize the cartpole, 2D quadrotor, and 3D quadrotor.  In contrast, the control law and value function learned using our method are presented below.
    
    	\begin{table*}[!htb]
  \centering
    \caption{Comparison of evaluation error for ours (resolvent-based control-affine model), LAM (logarithm-based control-affine model), and RLM (resolvent-based lifted linear model). 
    }
  \begin{tabular}{cccccccccccc}
    \toprule \multicolumn{3}{c}{Inverted pendulum} & 
    \multicolumn{2}{c}{Cartpole} &
    \multicolumn{2}{c}{2D quadrotor} &\multicolumn{2}{c}{3D quadrotor} 
     \\
    \cmidrule(r){1-3} \cmidrule(r){4-5} \cmidrule(r){6-7}\cmidrule(r){8-9} 
     & $E_f$ & $E_g$ &$E_f$ & $E_g$& $E_f$ & $E_g$ & $E_f$ & $E_g$ \\
    \cmidrule(r){1-3} \cmidrule(r){4-5} \cmidrule(r){6-7}\cmidrule(r){8-9} 
  \textbf{Ours} & 3.7E-3 & \textbf{9.9E-3} & \textbf{3.5E-3}& \textbf{2.0E-3} &\textbf{8.6E-4} & \textbf{2.2E-3}  &  \textbf{1.7E-3} & \textbf{8.5E-3}   \\
  LAM & 2.9E-1 & 6.4E-1 & 7.8E-2 & 1.4E-2 & 1.4E-2 &2.1&  2.4E-2 & 4.3E-2 \\
 RLM & 2.8E-3 & 1.2E-2 & 1.8E-2 & 3.4E-2 & 9.5E-3& 1.1E-1  & 1.8E-2 & 2.2E-1 \\

    \bottomrule
  \end{tabular}
  \label{compa_sysid}
\end{table*}

    \emph{2) Control performance.}
We randomly choose $50$ initial conditions in the associated domain of Table \ref{tbl:identify_para}, and we simulate these trajectories using the true system and the control learned from the identified system. To illustrate the performance of the learned control, we compute the average $\hat{C}(t)$ of the accumulated costs $\hat{C}_i(t), t\in[0,10]$ for these $50$ trajectories. We also perform the simulation and compute the average of the accumulated cost $C(t)$ using the learned control from the true system. The error of the mean accumulated cost $|\hat{C}(t) - C(t)|$ and the trajectories are depicted in Fig. \ref{fig2}. These results demonstrate that, when the dynamical system is unknown, the accumulated cost of the optimal control input obtained by this method closely aligns with that of the optimal control learned from the true system, with errors ranging from $10^{-5}$ to $10^{-3}$. The optimal control input learned from data of this unknown system effectively stabilizes the trajectories of this true system. 

For comparison, ADP \citep[Chapter 3]{jiang2017robust} performs well in low-dimensional cases, such as the 2D pendulum. However, it faces significant challenges in learning stable controllers for relatively higher-dimensional systems such as cartpole. Consequently, our method demonstrates strong potential for addressing optimal control problems in high-dimensional systems.
	
		
	
	\begin{figure}[thpb]
		\centering
        		\subfigure[Cost error of inverted pendulum]{
			\label{Fig1.sub1}
			\includegraphics[width=.38\textwidth]{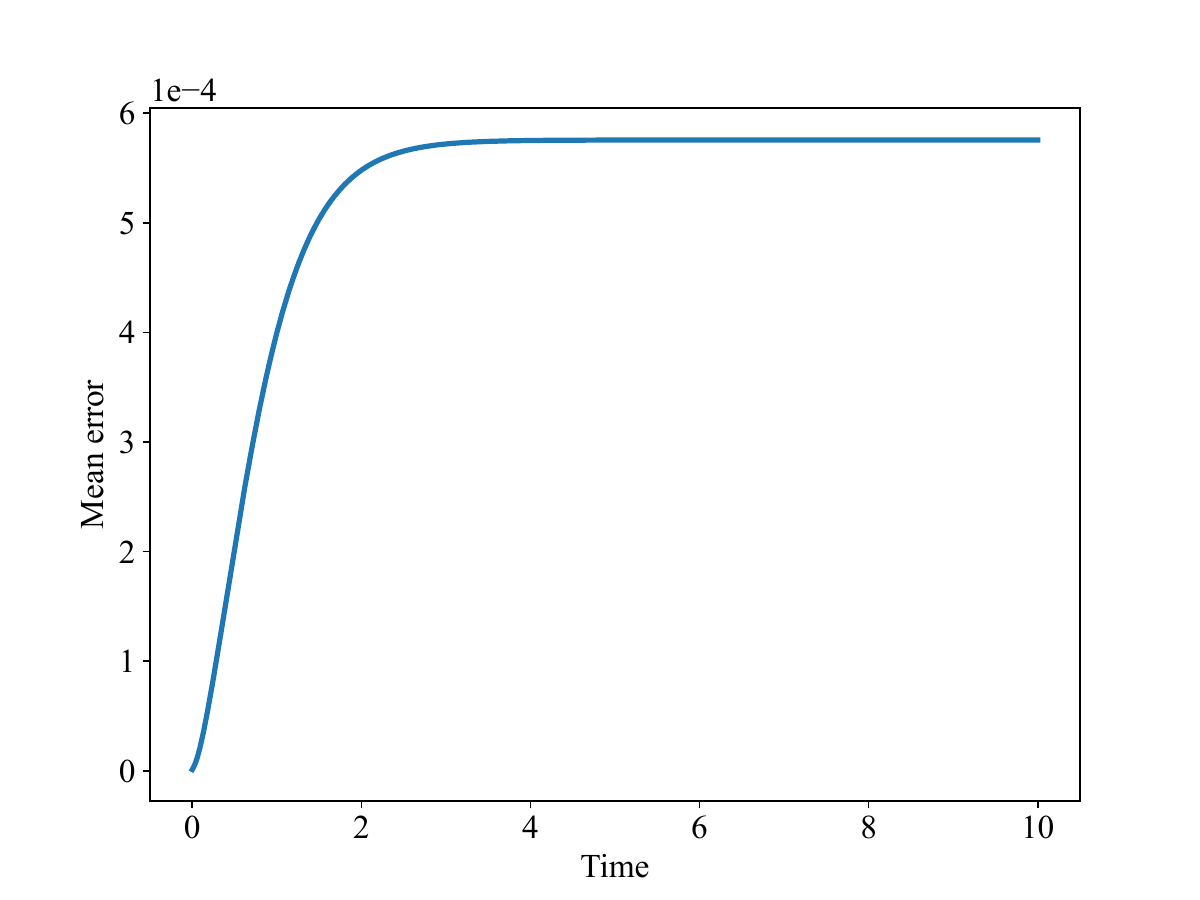}}
		\subfigure[Trajectories for inverted pendulum]{
			\label{Fig1.sub2}
			\includegraphics[width=.38\textwidth]{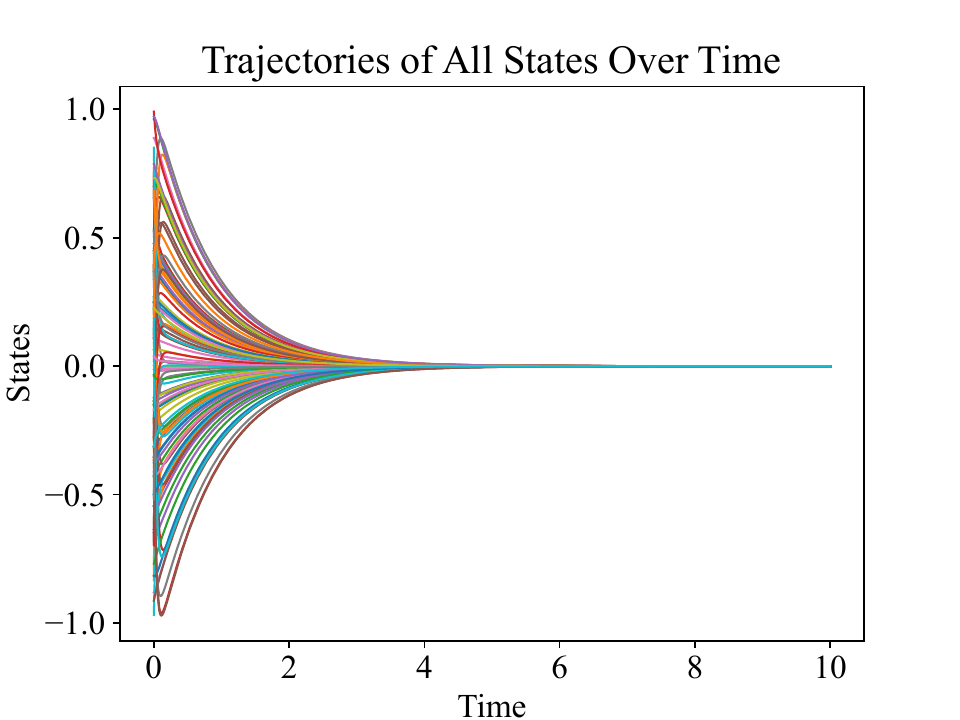}}\\
		\subfigure[Cost error of cartpole]{
			\label{Fig1.sub5}
			\includegraphics[width=.38\textwidth]{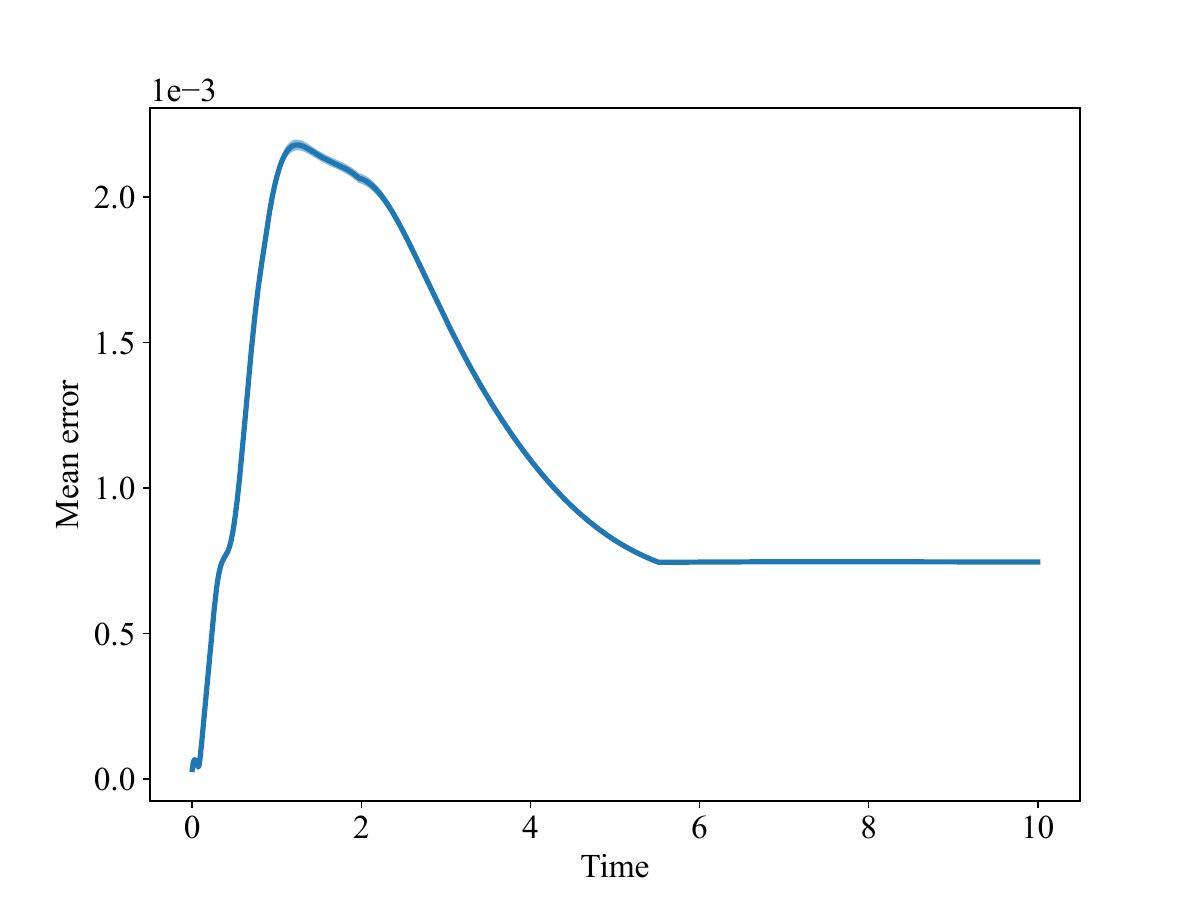}}
		\subfigure[Trajectories for cartpole]{
			\label{Fig1.sub6}
			\includegraphics[width=.38\textwidth]{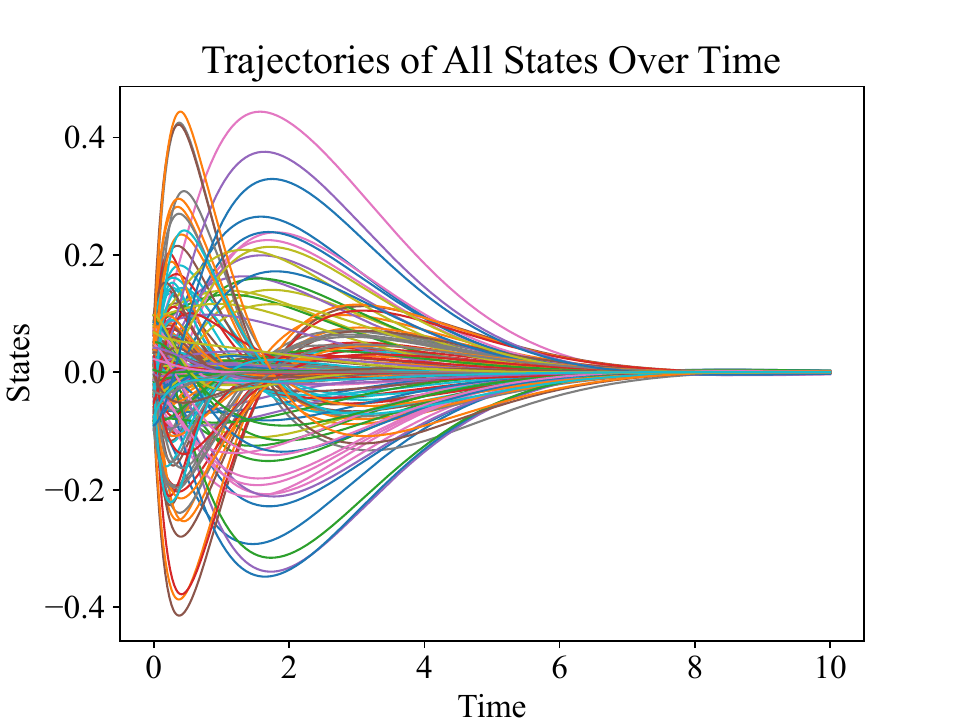}
		}
		\subfigure[Cost error of 2D quadrotor]{
			\label{Fig2.sub1}
			\includegraphics[width=.38\textwidth]{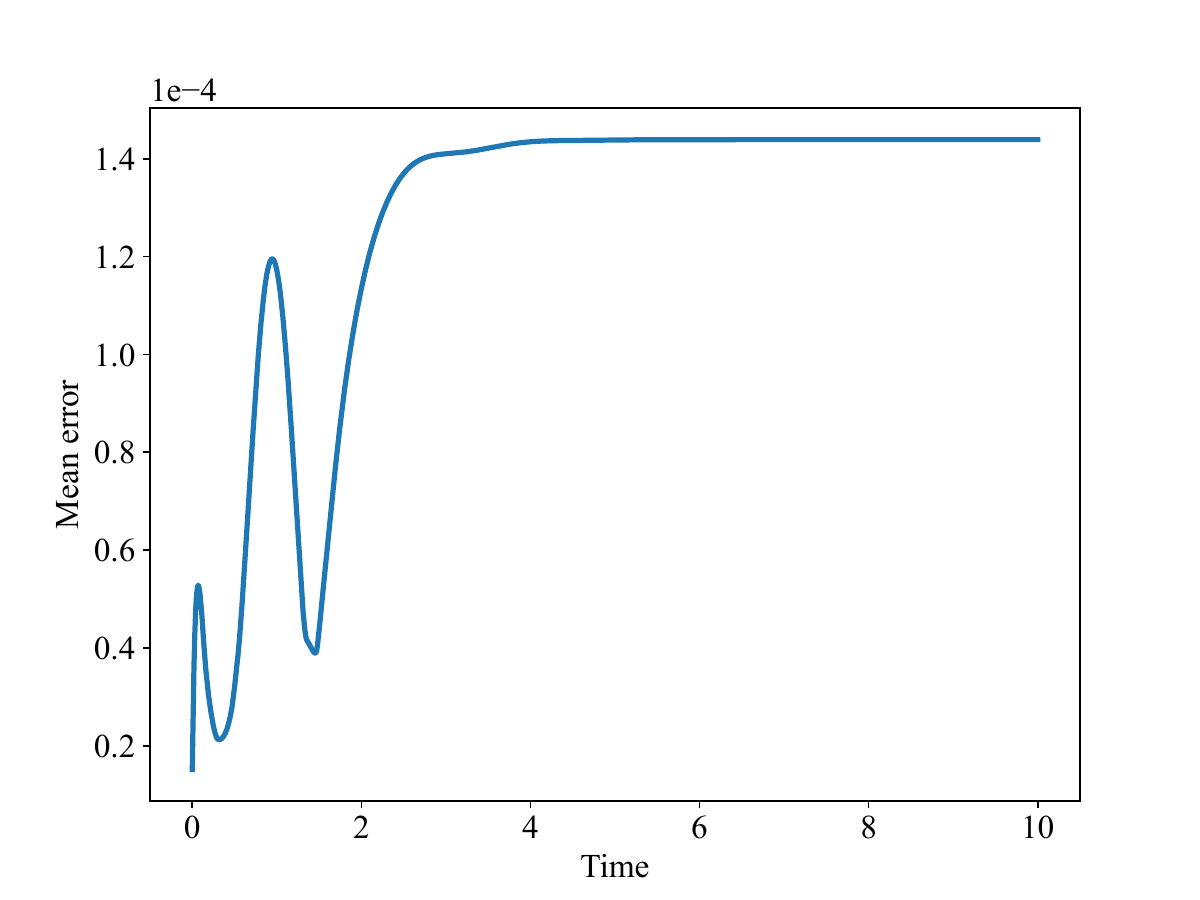}}
		\subfigure[Trajectories for 2D quadrotor]{
			\label{Fig2.sub2}
			\includegraphics[width=.38\textwidth]{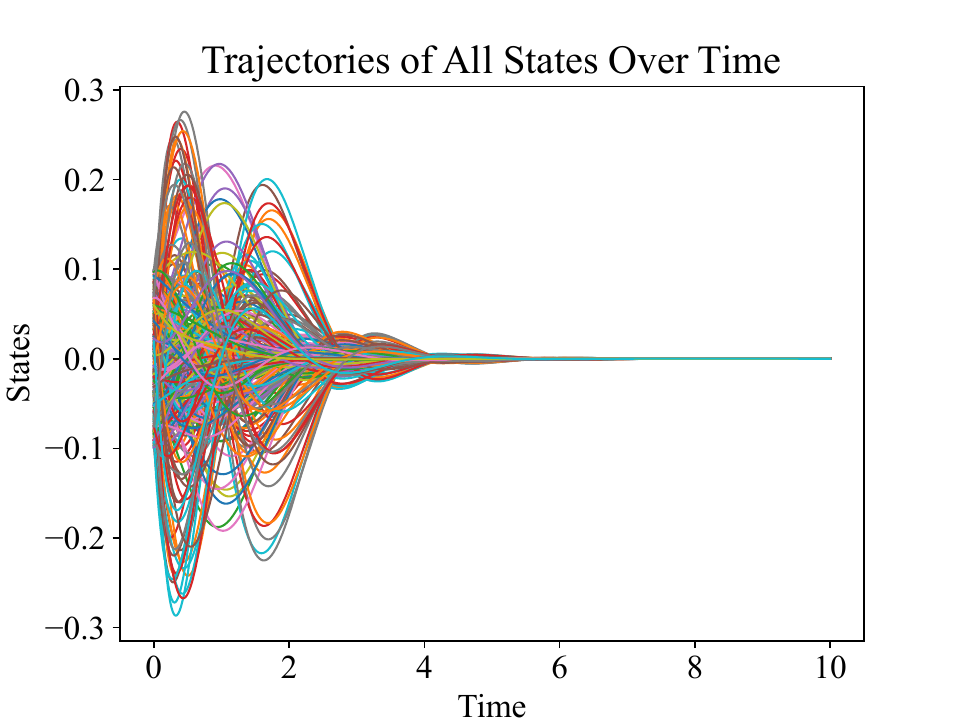}}
		\subfigure[Cost error of 3D quadrotor]{
			\label{Fig2.sub3}
			\includegraphics[width=.35\textwidth]{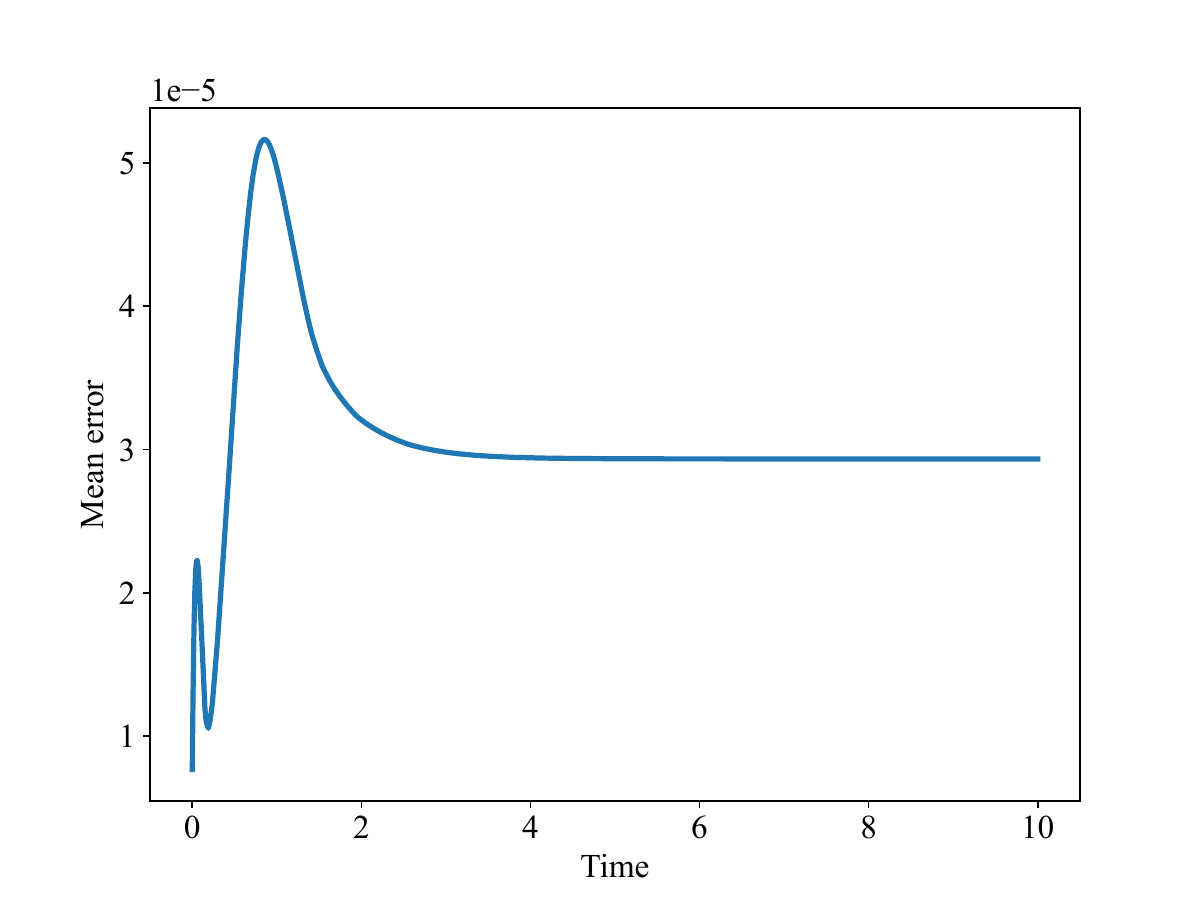}}
		\subfigure[Trajectories for 3D quadrotor]{
			\label{Fig2.sub4}
			\includegraphics[width=.35\textwidth]{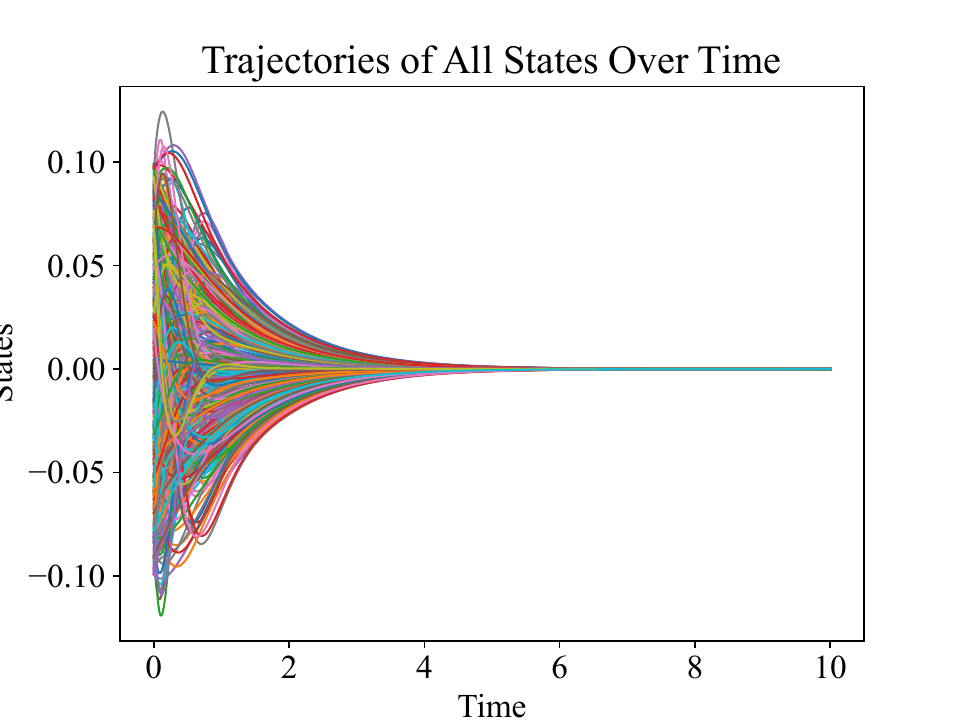}}
		\caption{The error between accumulated costs computed by the control learned from identified system and the true system (left). 50 trajectories of all states driven by the control learned from the identified system (right).}
		\label{fig2}
	\end{figure}
    
\section{Conclusion}

We proposed a novel approach for solving optimal control problems in high-dimensional nonlinear systems. The results demonstrated the effectiveness of our method in achieving stabilizing control and accurately approximating value functions, even for systems with state dimensions up to 9 and input dimensions up to 4. However, we acknowledge that the region of consideration for high-dimensional systems in this study is relatively limited. This reflects the inherent challenges in solving high-dimensional HJB equations, including computational complexity and sensitivity to identification errors. Addressing these limitations will be a key focus of our future research, aiming to broaden the stabilized region of the learned controls.
	
	\newpage
	\bibliography{cite}
	\appendix
    \section{Proof of Theorem \ref{conv_pi}}\label{sec:appendixA}
    \begin{proof}
		To prove the statement, it suffices to show that $\forall \theta>0,$ there exists $\delta>0$ such that if 
		$\|\kappa_{h}^{(i-1)}-\kappa^{(i-1)}\|_\infty <\delta, \|f_h-f\|_\infty <\delta, \|g_h-g\|_\infty <\delta$
		, we have $\|V^{(i)}_h-V^{(i)}\|_\infty <\theta, \|\kappa^{(i)} - \kappa_{h}^{(i)}\|_\infty  < \theta$. 
        
		{\bf Step 1 (Value functions):} In the $i$-th iteration of policy evaluation, the value functions $V_h^{i}$ and $V^{i}$ are solutions to the GHJB equations of the identified system and true system \begin{align}
			&\text{(GHJB)$_h$}\quad DV^{(i)}_h \cdot F_h^{(i)} = -L_h^{(i)},\\
			&\text{(GHJB)}\quad DV^{(i)} \cdot F^{(i)} = -L^{(i)},
		\end{align}
		where $L_h^{(i)}(\bx) = -Q(\bx)-\kappa_h^{(i-1)}(\bx)^T R\kappa_h^{(i-1)}(\bx), L^{(i)}(\bx) = -Q(\bx)-\kappa^{(i-1)}(\bx)^T R\kappa^{(i-1)}(\bx)$. The solutions are \begin{align}
			&V_h^{(i)}(\bx) = \int_{0}^{\infty} L_h^{(i)}(S_h^{(i)}(t,\bx)){\rm d}t,\\
			&V^{(i)}(\bx) = \int_{0}^{\infty} L^{(i)}(S^{(i)}(t,\bx)){\rm d}t,
		\end{align} where $S_h^{(i)}(t,\bx)$ and $S^{(i)}(t,\bx)$ are solutions to $\dot{\bx} = F_h^{(i)}(\bx,\kappa^{(i-1)}_h(\bx))$, and $\dot{\bx} = F^{(i)}(\bx,\kappa^{(i-1)}(\bx)),$ respectively. Since the value functions $V_h^{(i)}$ and $V^{(i)}$ are actually Lyapunov functions of these systems \citep{liu2023physics}, we have $V_h^{(i)}(\bx)<\infty$ and $V^{(i)}(\bx)<\infty$ for $\bx \in \Omega$. Hence, for $\forall \theta>0$ there exists a sufficiently large $T_1>0$ such that \begin{align}\label{secondbound}
		&0\le \int_{T_1}^{\infty} L_h^{(i)}(S_h^{(i)}(t,\bx)){\rm d}t\le \theta/2,\\
		&0\le \int_{T_1}^{\infty} L^{(i)}(S^{(i)}(t,\bx)){\rm d}t\le \theta/2.
		\end{align}
		It follows that \begin{equation}\label{key1}
			-\theta/2\le\int_{T_1}^{\infty} L_h^{(i)}(S_h^{(i)}(t,\bx)){\rm d}t- \int_{T_1}^{\infty} L^{(i)}(S^{(i)}(t,\bx)){\rm d}t \le \theta/2.
		\end{equation}
		Then we consider \begin{align}
			&\left\|\int_{0}^{T_1} L_h^{(i)}(S_h^{(i)}(t,\bx))-  L^{(i)}(S^{(i)}(t,\bx)){\rm d}t\right\|_\infty  \le \int_{0}^{T_1} \left\|L_h^{(i)}(S_h^{(i)}(t,\bx))-  L^{(i)}(S^{(i)}(t,\bx))\right\|_\infty {\rm d}t\\
			\le &\int_{0}^{T_1} \left\|L_h^{(i)}(S_h^{(i)}(t,\bx))- L^{(i)}(S_h^{(i)}(t,\bx))\right\|_\infty {\rm d}t+\int_{0}^{T_1} \left\|L^{(i)}(S_h^{(i)}(t,\bx))- L^{(i)}(S^{(i)}(t,\bx))\right\|_\infty {\rm d}t\\
			\le& T_1\|L_h^{(i)}-L^{(i)}\|_\infty + \text{Lip}_L\int_{0}^{T_1}\left\|S_h^{(i)}(t,\bx) -S^{(i)}(t,\bx) \right\|_\infty {\rm d}t,\label{first}
		\end{align} where $\text{Lip}_L$ is the maximum Lipschitz constant among that of $L^{(i)}$. There exists $C_{\kappa}>0$ such that \begin{align}
            \|L^{(i)}_h-L^{(i)}\|_\infty  = \|(\kappa^{(i-1)})^T R\kappa^{(i-1)} - (\kappa_h^{(i-1)})^T R\kappa_h^{(i-1)}\|_\infty \le C_{\kappa} \|\kappa^{(i-1)} - \kappa_h^{(i-1)}\|_\infty .\label{Lbound}
        \end{align}
		Based on continuous dependence of dynamical system \citep[Theorem 3.4, Chapter 3]{khalil2002nonlinear}, we have \begin{equation}\label{Serror}
			\left\|S^t_h(\bx) -S^t(\bx) \right\|_\infty \le \frac{\|F_h^{(i)}-F^{(i)}\|_\infty }{\text{Lip}_{F}}\left(\exp(\text{Lip}_{F} t)-1\right),
		\end{equation} where $\text{Lip}_{F}$ is the maximum Lipschitz constant among that of $F^{(i)}$. For $\|F_h^{(i)}-F^{(i)}\|$, we have \begin{align}
			\|F_h^{(i)}-F^{(i)}\|_\infty  &\le \|f_h-f\|_\infty  + \|g_h\kappa_h^{(i-1)}-g\kappa_h^{(i-1)}\|_\infty +\|g\kappa_h^{(i-1)}-g\kappa^{(i-1)}\|_\infty \\
			& \le \|f_h-f\|_\infty  + \|g_h-g\|_\infty \|\kappa_h^{(i-1)}\|_\infty +\|g\|_\infty \|\kappa_h^{(i-1)}-\kappa^{(i-1)}\|_\infty .
            \label{Ferror}
		\end{align}
		Substituting Eq. \eqref{Lbound}, Eq. \eqref{Serror} and Eq. \eqref{Ferror} into Eq. \eqref{first}, we have \begin{align}
			&\left\|\int_{0}^{T_1} L_h^{(i)}(S_h^{(i)}(t,\bx))-  L^{(i)}(S^{(i)}(t,\bx)){\rm d}t\right\|_\infty \\
			\le&T_1\left(\|L^{(i)}_h-L^{(i)}\|_\infty  + \text{Lip}_L\frac{\|F_h^{(i)}-F^{(i)}\|_\infty }{\text{Lip}_{F}}\left(\exp(\text{Lip}_{F} T_1)-1\right)\right)\\
            \le& C_{V_1} \|\kappa_h^{(i-1)}-\kappa^{(i-1)}\|_\infty  + C_{V_2}\|f_h-f\|_\infty  + C_{V_3}\|g_h - g\|_\infty ,
		\end{align} where $C_{V_1} = T_1 C_\kappa + T_1\|g\|_\infty \frac{\text{Lip}_L}{\text{Lip}_{F}}\left(\exp(\text{Lip}_{F} T_1)-1\right), C_{V_2} = T_1\frac{\text{Lip}_L}{\text{Lip}_{F}}\left(\exp(\text{Lip}_{F} T_1)-1\right), $ and $C_{V_3} = T_1\|\kappa_h^{(i-1)}\|_\infty \frac{\text{Lip}_L}{\text{Lip}_{F}}\left(\exp(\text{Lip}_{F} T_1)-1\right)$. Then for $\forall \theta>0,$ there exists $\delta_1$ such that, if $\|\kappa_{h}^{(i-1)}-\kappa^{(i-1)}\|_\infty <\delta_1, \|f_h-f\|_\infty <\delta_1, \|g_h-g\|_\infty <\delta_1$
		, we have \begin{equation}\label{key2}\left\|\int_{0}^{T_1} L_h^{(i)}(S_h^{(i)}(t,\bx))-  L^{(i)}(S^{(i)}(t,\bx)){\rm d}t\right\|_\infty \le \theta/2.
		\end{equation}
		Combining \eqref{key1} and \eqref{key2}, we have \begin{align}
			\|V_h^{(i)}(\bx)-V^{(i)}(\bx)\|_\infty  \le \left\|\int_{0}^{T_1} \left(L_h^{(i)}(S_h^t(\bx)) - L^{(i)}(S^t(\bx))\right) {\rm d}t\right\|_\infty  + \\
			 \left\|\int_{T_1}^{\infty} \left(L_h^{(i)}(S_h^t(\bx)) - L^{(i)}(S^t(\bx))\right) {\rm d}t\right\|_\infty  \le \theta.
		\end{align} 
		
		{\bf Step 2 (Control laws):} The optimal control laws are updated as \begin{align}
			\kappa^{(i)}(\bx) = -\frac{1}{2}R^{-1}g^T(\bx)(DV^{(i)})^T,~ \kappa_h^{(i)}(\bx) =  -\frac{1}{2}R^{-1}g_h^T(\bx)(DV_h^{(i)})^T.
		\end{align} 
		The error can be expressed as\begin{equation}
			\kappa^{(i)}(\bx) - \kappa_h^{(i)}(\bx) = -\frac{1}{2}R^{-1}\left(g^T(\bx)(DV^{(i)}(\bx))^T-g_h^T(\bx)(DV_h^{(i)}(\bx))^T\right).
		\end{equation}
		It follows that\begin{align}
			\|\kappa^{(i)}(\bx) - \kappa_h^{(i)}(\bx)\|_\infty &\le \frac{1}{2}\|R^{-1}\|  \| g^T(DV^{(i)})^T-g_h^T(DV_h^{(i)})^T \|_\infty \\
			&\le \frac{1}{2}\|R^{-1}\|\left(\|g^T-g_h^T\|_\infty \|DV^{(i)}\|_\infty  + \|g_h^T\|_\infty \|DV^{(i)} - DV_h^{(i)}\|_\infty \right).\label{kappabound}
		\end{align}
		
		It suffices to bound each term on the r.h.s. of \eqref{kappabound}. Since $F_h^{(i)}$, $F^{(i)}\in C^1(\Omega)$, and $L^{(i)}, L^{(i)}_h\in C^1(\Omega)$, the value functions $V^{(i)}$ and $V^{(i)}_h$ are continuously differentiable and given by \citep[Proposition 2]{liu2023physics} \begin{align}
		DV^{(i)}(\bx) = \int_{0}^{\infty}DL^{(i)}(S^{(i)}(t,\bx))\Phi^{(i)}(t,\bx){\rm d}t,\label{DV}\\
		DV_h^{(i)}(\bx) = \int_{0}^{\infty}DL^{(i)}_h(S_h^{(i)}(t,\bx))\Phi_h^{(i)}(t,\bx){\rm d}t,\label{DVh}
		\end{align} where $\Phi^{(i)}(t,\bx)$ and $\Phi^{(i)}_h(t,\bx)$ are the fundamental matrix solutions of the initial value problems: \begin{align}
		&\dot{\Phi}(t,\bx) = A^{(i)}(t,\bx)\Phi(t,\bx), ~\Phi(0) = I, ~A^{(i)}(t,\bx) = DF^{(i)}(\bx),\\
		&\dot{\Phi}_h(t,\bx) = A^{(i)}_h(t,\bx)\Phi_h(t,\bx), ~\Phi_h(0) = I, ~A^{(i)}_h(t,\bx) = DF_h^{(i)}(\bx).
		\end{align} 
		Thus, there exists a constant $C_1>0$ such that $\|DV^{(i)}\|_\infty \le C_1.$ Combining these results, we have \begin{equation}\label{bound1}
			\|g^T-g_h^T\|_\infty \|DV^{(i)}\|_\infty \le \|g^T-g_h^T\|_\infty  C_1.
		\end{equation}
		Then we consider \begin{align}\label{DVerror}
			DV^{(i)} - DV^{(i)}_h = \int_{0}^{\infty}\left(DL^{(i)}(S^{(i)}(t,\bx))\Phi^{(i)}(t,\bx) - DL_h^{(i)}(S_h^{(i)}(t,\bx))\Phi^{(i)}_h(t,\bx)\right){\rm d}t.
		\end{align}
		Expanding the difference   in the integrand, we get \begin{equation}\label{goalbound}
			\begin{aligned}
			&DL^{(i)}(S^{(i)}(t,\bx))\Phi^{(i)}(t,\bx) - DL_h^{(i)}(S_h^{(i)}(t,\bx))\Phi^{(i)}_h(t,\bx) \\
			=& DL^{(i)}(S^{(i)}(t,\bx))\Phi^{(i)}(t,\bx) - DL_h^{(i)}(S^{(i)}(t,\bx))\Phi^{(i)}(t,\bx)\\
			+&DL_h^{(i)}(S^{(i)}(t,\bx))\Phi^{(i)}(t,\bx) - DL_h^{(i)}(S_h^{(i)}(t,\bx))\Phi^{(i)}(t,\bx)\\ +&DL_h^{(i)}(S_h^{(i)}(t,\bx))\Phi^{(i)}(t,\bx) - DL_h^{(i)}(S_h^{(i)}(t,\bx))\Phi^{(i)}_h(t,\bx).
			\end{aligned}
		\end{equation}
		Then we proceed term by term to derive the bound. For the first term, since $L^{(i)}_h, L^{(i)}\in C^1$, there exists $C_{L}>0$, such that \begin{align}
			&\|DL^{(i)}(S^{(i)}(t,\bx))\Phi^{(i)}(t,\bx) - DL_h^{(i)}(S^{(i)}(t,\bx))\Phi^{(i)}(t,\bx)\|_\infty  \le C_{L}\|L^{(i)} - L_h^{(i)}\|_\infty \|\Phi^{(i)}(t,\bx)\|_\infty .
		\end{align}
        Given Eq. \eqref{Lbound} and the fact that $\|\Phi(t,\bx)\|_\infty $ is bounded by $C_\Phi>0$, we have \begin{equation} \label{bound2}
			\|DL^{(i)}\Phi^{(i)} - DL_h^{(i)}\Phi^{(i)}\|_\infty \le C_{L}C_{\Phi} C_{\kappa} \|\kappa^{(i-1)} - \kappa_h^{(i-1)}\|_\infty .
		\end{equation} 
		
		For the second term of \eqref{goalbound}, assuming $DL_h$ is Lipschitz with respect to $\bx \in \Omega,$ there exists $\text{Lip}_{DL}>0,$ such that \begin{align}
			&\|DL_h^{(i)}(S^{(i)}(t,\bx))\Phi^{(i)}(t,\bx) - DL_h^{(i)}(S_h^{(i)}(t,\bx))\Phi^{(i)}(t,\bx)\|_\infty \\
			\le& \text{Lip}_{DL_h}\|S^{(i)}(t,\bx) - S_h^{(i)}(t,\bx)\|_\infty \|\Phi^{(i)}(t,\bx)\|_\infty \\
			\le& \text{Lip}_{DL_h}C_{\Phi}\frac{\|F_h^{(i)}-F^{(i)}\|_\infty }{\text{Lip}_{F}}\left(\exp(\text{Lip}_{F} t)-1\right).\label{bound3}
		\end{align}
		
		The third term involves $\Phi^{(i)}(t,\bx) - \Phi^{(i)}(t,\bx)$ that satisfies:\begin{align}
			\frac{\rm d}{{\rm d}t}(\Phi^{(i)}(t,\bx) - \Phi^{(i)}(t,\bx))  = A^{(i)}(t,\bx)\Phi^{(i)}(t,\bx)- A_h^{(i)}(t,\bx)\Phi_h^{(i)}(t,\bx)\\
			=(A^{(i)} - A_h^{(i)})\Phi^{(i)}(t,\bx)+ A_h^{(i)} (\Phi^{(i)}(t,\bx) - \Phi_h^{(i)}(t,\bx))
		\end{align} By Gronwall's inequality, we have  \begin{align}
			\|\Phi^{(i)}(t,\bx) - \Phi_h^{(i)}(t,\bx)\|_\infty \le \int_{0}^{t} \|A^{(i)} - A_h^{(i)}\|_\infty \|\Phi^{(i)}(s,\bx)\|_\infty  e^{\int_{s}^{t}\|A_h^{(i)}(\tau,\bx)\|_\infty {\rm d}\tau}{\rm d}s\label{phibound}
		\end{align}
		Using the fact that $F_h^{(i)},F^{(i)}\in C^1(\Omega)$, there exists $C_F>0$ such that \begin{align}
			\|A^{(i)}(t,\bx) - A^{(i)}_h(t,\bx)\|_\infty  = \|DF^{(i)} -DF^{(i)}_h\|_\infty  \le C_F \|F^{(i)} -F^{(i)}_h\|_\infty .
		\end{align} 
		If $\|A_h^{(i)}(s,\bx)\|_\infty $ is bounded by $C_{A_h}>0$, it follows from \eqref{phibound} that \begin{equation}
			\|\Phi^{(i)}(t,\bx) - \Phi_h^{(i)}(t,\bx)\|_\infty  \le t C_F C_\Phi\|F^{(i)} - F_h^{(i)}\|_\infty  e^{C_{A_h} t}.
		\end{equation} 
		Since $L_h^{(i)}\in C^1(\Omega),$ there exists $C_{L_h}>0$ such that $\|DL^{(i)}_h(S^{(i)}(t,\bx))\|_\infty \le C_{L_h}$.
		It follows that the third term of \eqref{goalbound} is bounded by \begin{align}
			\|DL_h^{(i)}(S_h^{(i)}(t,\bx))\Phi^{(i)}(t,\bx) - DL_h^{(i)}(S_h^{(i)}(t,\bx))\Phi^{(i)}_h(t,\bx)\|_\infty \le te^{C_A t}C_{L_h}C_FC_\Phi\|F^{(i)} - F^{(i)}_h\|_\infty .\label{bound4}
		\end{align}
		For the integral in \eqref{DVerror}, we can found a sufficiently $T_2>0$ such that \begin{equation}
			\frac{1}{2}\|R^{-1}\|\left(\left\|\int_{T_2}^{\infty} DL^{(i)}(S^{(i)}(t,\bx))\Phi^{(i)}(t,\bx){\rm d}t\right\|_\infty  + \left\|\int_{T_2}^{\infty}  DL_h^{(i)}(S_h^{(i)}(t,\bx))\Phi^{(i)}_h(t,\bx) {\rm d}t\right\|_\infty\right)\le \theta/2.
		\end{equation}
		Combining \eqref{bound1}, \eqref{bound2}, \eqref{bound3}, and \eqref{bound4}, we can write \eqref{kappabound} as \begin{equation}
			\begin{aligned}
				&\|\kappa^{(i)} - \kappa_h^{(i)}\|_\infty \le \frac{1}{2}\|R^{-1}\|\left(C_1\|g^T-g_h^T\|_\infty+  C_2 \|\kappa^{(i-1)} - \kappa_h^{(i-1)}\|_\infty+  (C_3+C_4)\|F^{(i)} - F^{(i)}_h\|_\infty \right)\\
				+&\frac{1}{2}\|R^{-1}\|\left(\left\|\int_{T_2}^{\infty} DL^{(i)}(S^{(i)}(t,\bx))\Phi^{(i)}(t,\bx){\rm d}t\right\|_\infty + \left\|\int_{T_2}^{\infty}  DL_h^{(i)}(S_h^{(i)}(t,\bx))\Phi^{(i)}_h(t,\bx){\rm d}t\right\|_\infty\right),
			\end{aligned}
		\end{equation} where \begin{align}
		    &C_2 = T_2\|g_h^T\|_\infty C_{L}C_{\Phi} C_{\kappa}\\
            &C_3 = T_2\|g_h^T\|_\infty C_{\Phi}\frac{\text{Lip}_{DL_h}}{\text{Lip}_{F}}\left(\exp(\text{Lip}_{F} T_2)-1\right),\\
            &C_4=T_2 \|g_h^T\|_\infty T_2e^{C_A T_2}C_{L_h}C_FC_\Phi.
		\end{align}
 		Similar to the previous proof, we can find $\delta_2>0$, such that  $\|\kappa^{(i)}(\bx) - \kappa_h^{(i)}(\bx)\|_\infty\le \theta$. Thus, the proof is completed by setting $\delta = \min\{\delta_1,\delta_2\}$.
	\end{proof}

\end{document}